\newcommand{\keywords}[1]{\par\addvspace\baselineskip
\noindent\keywordname\enspace\ignorespaces#1}
\let\llncssubparagraph\subparagraph
\let\subparagraph\paragraph
\let\subparagraph\llncssubparagraph
\begin{document}

\mainmatter  

\setlength{\intextsep}{0pt}

\setlength{\floatsep}{5pt plus 1.0pt minus 10.0pt}
\title{Complexes Detection in Biological Networks via Diversified Dense Subgraphs Mining}

\titlerunning{Complexes Detection by Discovering Diversified Dense Subgraphs}

%
%
\author{Xiuli Ma\inst{1}\textsuperscript{(\Letter)}  \and Guangyu Zhou\inst{2} \and Jingjing Wang\inst{2} \and Jian Peng\inst{2}\textsuperscript{(\Letter)} \and Jiawei Han\inst{2}\textsuperscript{(\Letter)} }

\renewcommand\Authands{ and }

%
\authorrunning{Complexes Detection by Discovering Diversified Dense Subgraphs}

\institute{School of EECS, Key Laboratory of Machine Perception (MOE), Peking University, Beijing, China\\ \mailsa\\ \and
Dept. of Computer Science, University of Illinois at Urbana-Champaign, Urbana, IL, US\\\mailsb
}

%
%

\toctitle{Lecture Notes in Computer Science}
\tocauthor{Authors' Instructions}
\maketitle

\begin{abstract}
Protein-protein interaction (PPI) networks, providing a comprehensive landscape of protein interacting patterns, enable us to explore biological processes and cellular components at multiple resolutions. For a biological process, a number of proteins need to work together to perform the job. Proteins densely interact with each other, forming large molecular machines or cellular building blocks. Identification of such densely interconnected clusters or protein complexes from PPI networks enables us to obtain a better understanding of the hierarchy and organization of biological processes and cellular components. Most existing methods apply efficient graph clustering algorithms on PPI networks, often failing to detect possible densely connected subgraphs and overlapped subgraphs. Besides clustering-based methods, dense subgraph enumeration methods have also been used, which aim to find all densely connected protein sets. However, such methods are not practically tractable even on a small yeast PPI network, due to high computational complexity. In this paper, we introduce a novel approximate algorithm to efficiently enumerate putative protein complexes from biological networks. The key insight of our algorithm is that we do not need to enumerate all dense subgraphs. Instead we only need to find a small subset of subgraphs that cover as many proteins as possible, meanwhile have minimal overlap among themselves. The problem is formulated as finding a diverse set of dense subgraphs, where we develop highly effective pruning techniques to guarantee efficiency. To handle large networks, we take a divide-and-conquer approach to speed up the algorithm in a distributed manner. By comparing with existing clustering and dense subgraph-based algorithms on several human and yeast PPI networks, we demonstrate that our method can detect more putative protein complexes and achieves better prediction accuracy. \footnote{This paper was selected for oral presentation at RECOMB 2016 and an abstract is published in the conference proceedings}
\keywords{Complexes detection, dense subgraphs detection, graph clustering, diversification.}
\end{abstract}

\section{Introduction}

Recent developments in high-throughput experimental procedures have resulted in very large repositories of protein-protein interaction (PPI) and genetic interaction networks\cite{gavin2006proteome,krogan2006global,uetz2000comprehensive,ito2001comprehensive}. These interaction networks provide us a molecular landscape that defines fundamental biological processes in living cells. Many proteins need to interact closely with each other and form larger molecular machines to perform complex molecular functions. One major type of such molecular machines is protein complexes, in which proteins are densely packed  together with very strong or even permanent interactions. As most cellular tasks are performed not by individual proteins or genes, but by groups of functionally associated proteins or genes \cite{adamcsek2006cfinder}, identifying such tightly knit groups is crucial to understand and explore the structural and functional properties of the biological networks. Thus detection of protein complexes \cite{bader2003automated,van2001graph,wu2009core,nepusz2012detecting} from PPI networks is a fundamental problem in network biology, which has been normally formulated as a problem to find subsets of densely interconnected proteins or clusters in networks. 

Although quite a few methods on complex detection claimed that, protein complexes correspond to ``densely connected regions'' or ``dense subgraphs'' in PPI networks \cite{adamcsek2006cfinder,bader2003automated,king2004protein,liu2009complex,nepusz2012detecting}, most of them generally model the problem as graph clustering. They typically first partition the networks into clusters and then post-process or filter the clusters with density thresholds. Considering that previous research have identified the existence of overlapped complexes, some recent algorithms have been proposed to detect overlapping clusters \cite{ahn2010link,palla2005uncovering,nepusz2012detecting,adamcsek2006cfinder}. However, no matter with or without overlapping considered, the objective function of such clustering methods is to maximize the difference between intra- and inter-connectedness of clusters. Such relative models have no preset level for what is sufficiently dense, which may lead to some results not dense enough, or lose some really dense or meaningful results.

There are also algorithms that predict protein complexes by identifying dense subgraphs in PPI networks \cite{adamcsek2006cfinder,bader2003automated,pizzuti2014algorithms}. Such algorithms can enumerate all the dense subgraphs however being computational intractable. Moreover, there is no need to find all dense subgraphs as there will be substantial redundancy. The number of all the dense subgraphs could also be tremendous such that the biologists get overwhelmed. Many small subgraphs may be completely covered in their supersets which are also identified as dense subgraphs. Intuitively dense subgraphs with maximal cardinality, which cannot be further extended, should be considered. Furthermore, although overlap is allowed between subgraphs, very large overlap means redundancy. A diverse set of dense subgraphs which cover as many proteins as possible in the networks is more desirable. Overall, meaningful subgraphs are expected to satisfy the following requirements: they cannot be too restrictive as cliques; they should be based on density of links; they should not contain any cut nodes or cut edges; and overlap is allowed but redundancy should be minimized.
Recently, some mining algorithms have been proposed or can be adopted to detect a set of dense subgraphs \cite{balalau2015finding,tsourakakis2013denser}, but in most cases they can only identify non-overlapping dense subgraphs. They typically discover a single densest subgraph, then remove its nodes and edges from the network, and iterate until enough subgraphs are found or no edges are left. For the density, some methods propose to maximize the average degree \cite{balalau2015finding} in subgraphs, which tends to find large subgraphs with only modest density. Some other methods \cite{adamcsek2006cfinder,bader2003automated} first identify ``cliques'' or ``cores'' and then merge these cliques. Such heuristics are too simple and may miss important dense subgraphs. 

In this paper, we propose to detect protein complexes by finding diversified dense subgraphs, with an explicit definition for density, diversity and a set of diversified results, and a unified method during enumeration. As far as we know, no existing work takes such diversifying into the problem of complexes detection. Specially, the key component of our algorithm is a set of efficient search trees that compactly traverse all dense subgraphs by a depth-first construction. We then define a node-specific potential to guide the search process and develop efficient pruning techniques based on both density and diversity of subgraphs.
In this way, we extract the diverse dense subgraphs ``on-the-fly'' during the enumeration of the maximal dense subgraphs, thus greatly improve the scalability of the algorithm. Finally, we speed up the algorithm in parallel to handle large-scale networks. 
We extensively evaluate the effectiveness and efficiency of our method on several PPI networks from yeast and human. In all networks, our approach detects more putative dense complexes, and achieves higher accuracy and better one-to-one mapping with reference complexes than several state-of-the-art algorithms.

\section{Problem Definition}

Let $G=(V, E)$ be an undirected graph with a vertex set $V$ and an edge set $E\subseteq V\times V$. $w(e)$ is the weight of edge $e\in E$. Weights are normalized to the range $[0, 1]$. We treat unweighted graphs as the special case where all weights are equal to 1. For a set of vertices $S \subseteq V$, we denote the subgraph of $G$ induced by $S$ as $G(S) = (S, E(S))$, where $E(S)=\{(u,v)\in E | u, v\in S\}.$

\begin{definition} \textbf{Density:} The density of subgraph $S$, $den(S)$, is the ratio of the total weight of edges in $E(S)$ to the number of possible edges among $|S|$ vertices. The density of $S$ is defined as {\small $den(S) = \sum_{u,v\in S}w(u,v) / {|S| \choose 2}$}.

\end{definition}
Note that if the graph is unweighted, the numerator is simply the number of actual edges. No matter weighted or not, the maximum possible density is $1$.

\begin{definition} \textbf{Dense subgraphs:}
Given a graph $G$ and a density threshold $\theta$, $0 \leq \theta \leq 1$, an induced subgraph S of $G$ is called a dense subgraph if it is connected and its density is no less than $\theta$. A dense subgraph S is called a maximal dense subgraph if there exists no dense subgraph $S'$ in $G$ such that $S\subseteq S'$ .
\end{definition}

In this paper, we diversify the maximal dense subgraphs both for efficiency and conciseness. Specially, we diversify the results to cover the most nodes.

\begin{definition}\textbf{Coverage:} 
 Given a set of dense subgraphs $D = \{S_1, S_2, ...\}$ in graph G, the coverage of D, denoted by $cov(D)$, is the number of nodes in G covered by the dense subgraphs in D, i.e., 
 {\small $cov(D) = |\cup_{S_i\in D}S_i|$}.
 \end{definition}

\subsubsection* {Problem Statement.} 
Given a graph $G$, a density threshold  $\theta$ and an integer $k$, the problem of detecting $k$ diversified maximal dense subgraphs is to discover a set $D$, such that each $S_i\in D$ is a maximal dense subgraph with density no less than $\theta$ in $G$, $|D| \leq k$, and $cov(D)$ is maximized. $D$ is called the set of diversified maximal dense subgraphs.

Figure 1 illustrates three complexes detected by our algorithm from the Krogan core dataset with density threshold 0.7. Shaded areas represent the dense subgraphs. Red, blue and green nodes represent diverse nodes of the complexes, respectively; yellow nodes represent overlapped nodes. The set of two diversified maximal dense subgraphs is $\{A, C\}$.

Considering that it may be hard for the users to specify a suitable value for $k$, we have an alternative problem definition.

\begin{definition} \textbf{Diversity:} The diversity of a set of dense subgraphs D, div(D), is the ratio of the coverage of D to the number of dense subgraphs in D:{ \small $div(D) = cov(D)/|D|$}.

\end{definition}
\subsubsection* {cov/k alternative problem statement.}
Given a graph $G$, a density threshold  $\theta$ and a diversity threshold $\gamma$, the problem ``$cov/k$'' is to discover a set $D$ of maximal dense subgraphs, such that each $S\in D$ is a maximal dense subgraph with density no less than $\theta$ in $G$, and $|D|$ is maximum that $div(D)\geq \gamma$. In essence, it means that, each subgraph in the result set $D$ need to cover at least $\gamma$ nodes in average. 

\begin{figure}
\centering
\includegraphics[height=7cm]{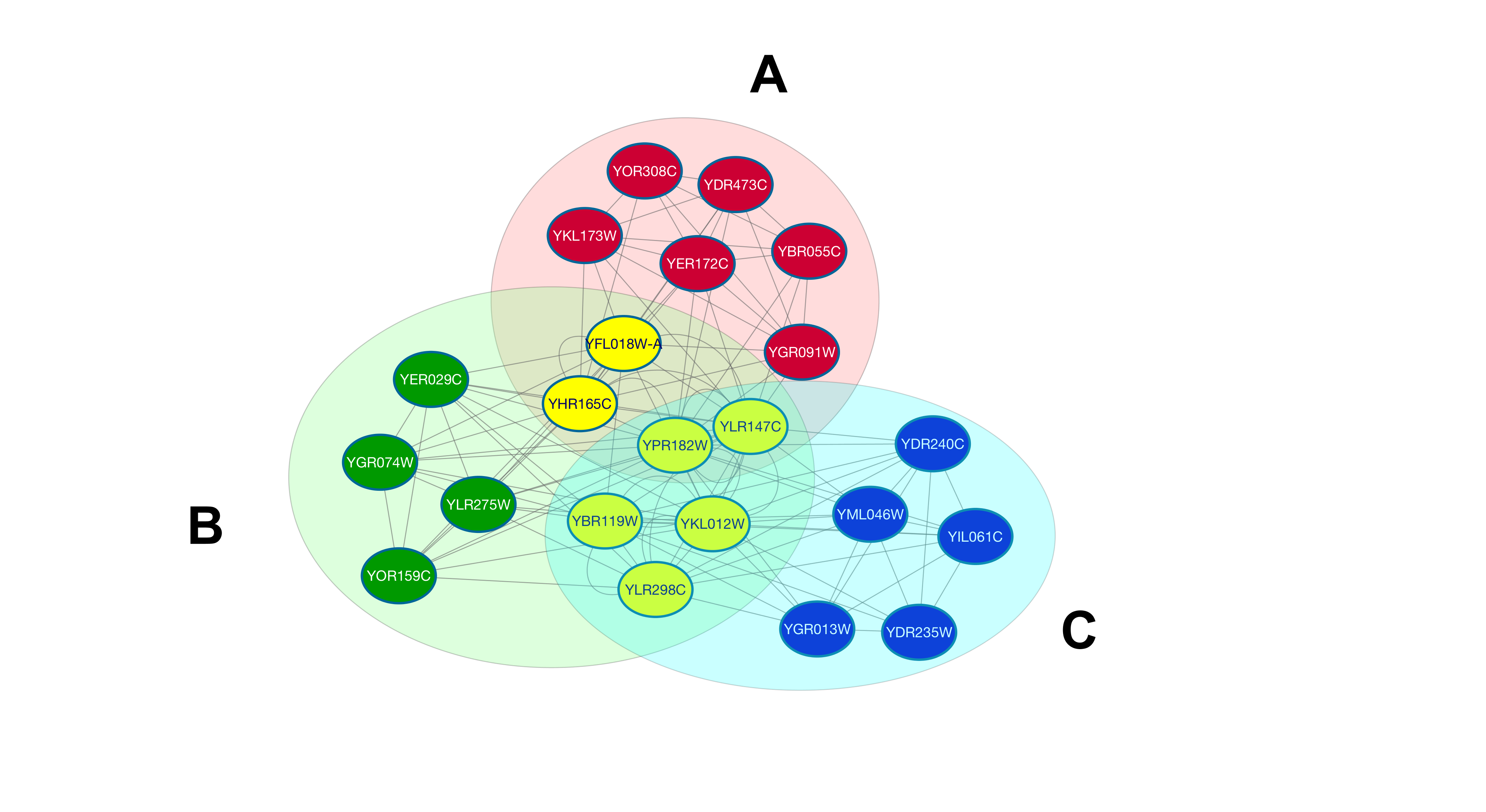}
\caption{Illustration of three complexes detected by our algorithm from the Krogan core dataset with density threshold 0.7.\\}

\label{fig:overlap}
\end{figure}

So our focus is on maximizing the coverage subject to a cardinality constraint. Unfortunately, this problem is NP-hard\cite{feige1998threshold}. Note that a naive solution to the problem is to firstly enumerate all the dense subgraphs, pick out the maximal ones, and then find $k$ of them that cover most nodes in the graph by using the approximate greedy max $k$-cover algorithm\cite{nemhauser1978analysis}. However, such a solution is impractical when the graph is large because the number of dense subgraphs is exponential in the graph size. Also, repeated scans of inputs for max $k$-cover is inefficient. In this paper, we integrate the searching and diversifying tightly into one unified process.

\section{Diversifying Maximal Dense Subgraphs}
As the classic recursive backtracking procedure\cite{tomita2006worst,eppstein2010listing} that enumerates all the dense subgraphs in a graph is impractical, we target at the dense subgraphs with both maximal cardinality and distinctive contribution on coverage. Before diving into the algorithm, let us first identify two properties, one for density, the other for coverage.

\subsection{Properties of Density and Coverage}
\subsubsection{The Pseudo Anti-Monotonicity Property}
Note that in our definition of dense subgraphs, the family of dense subgraphs no longer satisfies the anti-monotone property. However, we can still get a rather general while effective-for-pruning property. We call it pseudo anti-monotonicity because it is not as rigid as anti-monotonicity however at least holds in one case.

\begin{theorem} [Pseudo anti-monotonicity]
Let $G$ be a graph. Given a density threshold $\theta$, if $G'$ is a dense subgraph, there is at least one subset of $G'$ whose density is no less than $\theta$.
\end{theorem}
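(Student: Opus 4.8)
The statement as literally written is satisfied by $G'$ itself, so the content worth proving---and the form that is actually useful for pruning during a depth-first enumeration---is the stronger claim that deleting a \emph{single} vertex from $G'$ can preserve density. Concretely, the plan is to show: if $|G'| \ge 3$, then removing from $G'$ a vertex of minimum weighted degree (measured inside $G'$) yields a subset whose density is at least $den(G')$, hence at least $\theta$; and if $|G'| \le 2$, the required subset is simply $G'$ itself. Thus the whole argument reduces to a one-line averaging fact.

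First I would fix notation: set $S = G'$, $n = |S|$, and $W = \sum_{u,v \in S} w(u,v)$, so that $den(S) = W / {n \choose 2} \ge \theta$ by hypothesis. For $v \in S$ let $d(v) = \sum_{u \in S \setminus \{v\}} w(u,v)$ be its weighted degree within $S$. The handshaking identity $\sum_{v \in S} d(v) = 2W$ then forces the existence of a vertex $v^{\star}$ with $d(v^{\star}) \le 2W/n$. Next I would write $S' = S \setminus \{v^{\star}\}$ and observe that $E(S')$ is exactly the set of edges of $E(S)$ not incident to $v^{\star}$, so the total weight in $S'$ is $W - d(v^{\star})$. Plugging in the degree bound and simplifying:
\[
den(S') \;=\; \frac{W - d(v^{\star})}{{n-1 \choose 2}} \;\ge\; \frac{W - 2W/n}{{n-1 \choose 2}} \;=\; \frac{W(n-2)/n}{(n-1)(n-2)/2} \;=\; \frac{2W}{n(n-1)} \;=\; \frac{W}{{n \choose 2}} \;=\; den(S) \;\ge\; \theta .
\]
The only input used anywhere is the minimum-degree bound $d(v^{\star}) \le 2W/n$; everything else is bookkeeping.

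I do not expect a real obstacle: the result is an averaging argument and the chain of equalities above is essentially the entire proof. The two points that need a little care are (i) the degenerate cases $n \in \{1,2\}$, where ${n-1 \choose 2} = 0$ makes the density of $S'$ undefined---here I would simply note that the theorem asks only for \emph{some} subset of $G'$ of density $\ge \theta$, and $G'$ itself works; and (ii) the fact that the theorem deliberately does not demand the witnessing subset be connected or itself be a dense subgraph in the sense defined above, so no connectivity argument is needed (and indeed $S \setminus \{v^{\star}\}$ may be disconnected). If one wanted a connected witness the argument would become genuinely harder, but that is not what is claimed, so the pure computation suffices.
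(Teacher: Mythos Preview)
Your proposal is correct and matches the paper's approach: the paper also removes a vertex of degree at most the average and asserts that $den(G'\setminus\{v\}) \ge den(G')$, citing \cite{uno2010efficient} for this fact rather than writing out the averaging computation you supply. The paper likewise flags the connectivity caveat you raise in~(ii), noting that $G'\setminus\{v\}$ may be disconnected and explicitly declaring that such subgraphs are discarded as biologically meaningless; your observation that the theorem as stated does not require a connected witness is exactly right.
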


\begin{proof}
If we can find such one subset of $G'$, we can surely prove it. Let $v$ be a vertex in $G'$ with the degree no greater than the average of the degrees of vertices in $G'$, the density of $G'\setminus \{v\}$ is no less than the density of $G'$ \cite{uno2010efficient}. However, it is possibly such a subset is not connected at all. In this case, $G'$ has not any subset that is both dense and connected. In this paper, we would rather give up such dense subgraphs, as they are meaningless in biological networks. We do not expect those complexes with cut-nodes (whose removal would disconnect a complex).
\end{proof}

Note that this theorem introduces an adjacency relationship on dense subgraphs. Since any dense subgraph $G'$ has such a vertex $v$, we can remove the vertices of $G'$ iteratively until $G'$ is empty, passing through only dense subgraphs. That is also to say, each dense subgraph can be incrementally constructed, one vertex at a time. However, we need an appropriate ordering of the vertices to ensure the monotone decreasing of density along the growth in order to prune with density.

\subsubsection{The Submodularity Property}

Actually, by diversifying, our goal is to select a manageable subset of maximal dense subgraphs that are most representative according to the objective function coverage. Many natural notions of ``representativeness'' satisfy submodularity, an intuitive notion of diminishing returns, also is coverage \cite{badanidiyuru2014streaming}.

Firstly, coverage is naturally associated with a marginal gain, 
\begin{equation}
\small
\Delta cov(S\mid D)=cov(D \cup \{S\}) - cov(D)
\end{equation}
where $D$ is a set of maximal dense subgraphs, $S$ is a maximal dense subgraph, which qualifies the increase in coverage obtained when adding $S$ to set $D$. 

\begin{theorem}
The coverage function is monotone and submodular. Coverage is monotone as for all S and D it holds that $\Delta cov(S\mid D) \geq 0$. 
Coverage is submodular as for all $A \subseteq B \subseteq D$ and $S\in D\setminus B$ the following diminishing returns condition holds:
\begin{equation}
\small
\Delta cov(S\mid A) \geq \Delta cov(S\mid B)
\end{equation}
\end{theorem}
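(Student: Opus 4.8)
The plan is to reduce both assertions to a single transparent identity. For a collection $D$ of vertex subsets write $U_D = \cup_{S_i \in D} S_i$; I claim that for any maximal dense subgraph $S$,
\begin{equation}
\Delta cov(S \mid D) = |S \setminus U_D|.
\end{equation}
This holds because $U_D$ and $S \setminus U_D$ are disjoint with union $U_D \cup S$, so $cov(D \cup \{S\}) = |U_D \cup S| = |U_D| + |S \setminus U_D|$, while $cov(D) = |U_D|$; subtracting gives the identity. So I would first establish this identity and then read off the two parts of the theorem from it.

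Monotonicity follows at once: $|S \setminus U_D| \ge 0$ for every $S$ and $D$, hence $\Delta cov(S \mid D) \ge 0$.

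For submodularity, let $A \subseteq B \subseteq D$ and $S \in D \setminus B$. Since every set belonging to $A$ also belongs to $B$, we have $U_A \subseteq U_B$. Hence $S \setminus U_B \subseteq S \setminus U_A$, and passing to cardinalities, $|S \setminus U_B| \le |S \setminus U_A|$. By the identity above this is precisely $\Delta cov(S \mid B) \le \Delta cov(S \mid A)$, the diminishing-returns inequality. The one point to handle carefully is to keep the two ``monotonicities'' apart: here I only use that the union $U_D$ grows when the collection $D$ grows (a trivial set-theoretic fact), which is not the same statement as monotonicity of $cov$ itself.

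I do not expect a real obstacle; this is the classical submodularity of set-coverage functions, and the whole argument is a few lines of bookkeeping around the disjoint decomposition $|U_D \cup S| = |U_D| + |S \setminus U_D|$. It is worth noting explicitly that the proof never uses maximality or the density threshold of the members of $D$, so restricting attention to maximal dense subgraphs in the problem statement costs nothing here, and that this submodularity is exactly the structural property that later justifies a greedy selection strategy with the standard $(1-1/e)$-approximation guarantee for the coverage objective.
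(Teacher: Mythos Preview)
Your argument is correct and is precisely the standard proof that set-coverage is monotone and submodular: the identity $\Delta cov(S\mid D)=|S\setminus U_D|$ immediately yields both nonnegativity and the diminishing-returns inequality via $A\subseteq B\Rightarrow U_A\subseteq U_B\Rightarrow S\setminus U_B\subseteq S\setminus U_A$. Your side remarks (that maximality and the density threshold play no role, and that this is what underlies the greedy guarantee) are also accurate.

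As for comparison: the paper does not actually supply a proof of this theorem. It states the result and then proceeds directly to use it, implicitly treating submodularity of coverage as a known fact (the surrounding discussion cites \cite{badanidiyuru2014streaming}). So there is nothing in the paper to compare your proof against; your write-up simply fills in what the authors left unstated, and does so in the canonical way.
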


As observed in \cite{badanidiyuru2014streaming}, the key reason why the classical greedy algorithm for submodular maximization works is that at every iteration, an element is identified that reduces the ``gap'' to the optimal solution by a significant amount. More formally, in our scenario, if $D_i$ is the set of first $i$ maximal dense subgraphs picked by the greedy algorithm, then the marginal gain of the next maximal dense subgraph {\small$S_{i+1}$} is at least {\small $(OPT - cov(D_i))/(k - |D_i|)$}, where $OPT$ is the coverage for the optimal solution, $k$ is the cardinality constraint of the problem. In this way, we can ``sieve'' out maximal dense subgraphs with large marginal values. 

However, this will require that we know (a good optimization to) the value of the optimal solution $OPT$, which is hard to obtain before getting the solution. Actually, in order to get a very crude estimate on $OPT$, it is enough to know the maximum marginal gain of any single result, {\small $m = max_{S\in D} cov(\{S\})$}. Then, from submodularity, we have that {\small $m \leq OPT \leq km$}. Once we get this crude upper bound $km$ on $OPT$, we can refine it. 

Note that, this assumes that the value $m$ is known at the very beginning of the algorithm. That is to say, if we first find the maximum dense subgraph, we can pick out the dense subgraphs with enough marginal gain on the fly, thus diversify the results during the enumeration. Here, a maximum dense subgraph of a graph is a dense subgraph having maximum size, while a maximal dense subgraph is a dense subgraph that is not a subset of any other dense subgraph.
The crucial point is, we must find the maximum dense subgraph first.

\subsection{Finding the Maximum Dense Subgraph First}

As analyzed above, knowing the maximum dense subgraph first will help us decide online whether any resulting subgraph has sufficient marginal gain. So, we need an ordering to build the search tree, finding the maximum dense subgraph first, meanwhile, getting all the other maximal dense subgraphs efficiently.

With this goal on our mind, we adopt the greedy randomized adaptive search procedure (GRASP)\cite{abello2002massive}. As finding the maximum dense subgraphs is computationally intractable, it has been proposed to find the maximal dense subgraph. When a maximal dense subgraph is iteratively constructed, the choice of next vertex to be added can be determined by ordering all candidates by potential \cite{abello2002massive}. Specially, the maximum dense subgraph corresponds to the leftmost branch in the search tree.

In order to introduce the notion of potential, we first give several preliminaries \cite{abello2002massive}.
A vertex $x$ is called a $\gamma$-vertex with respect to a dense subgraph $S$, if $G(S\cup \{x\})$ is a dense subgraph given the density threshold $\gamma$. The set of $\gamma$-vertices with respect to $S$ is denoted by $N_{\gamma}(S)$.

The potential of a subgraph R is
\begin{equation}
\small
\phi(R) = |E(R)| -\gamma{|R| \choose 2}
\end{equation}

The potential of a set $R$ with respect to a disjoint set $S$ is
\begin{equation}
\small
\phi_S(R) = \phi(S\cup R)
\end{equation}

Assume $S$ is a dense subgraph. We seek a vertex $x\in N_{\gamma}(S)$ to be added to S. One strategy for selecting $x$ is to measure the effect of its selection on the potential of the other vertices in $N_{\gamma}(S)$. To accomplish this, define the potential difference of a vertex $y\in N_{\gamma}(S)\setminus\{x\}$ by selecting $x$ to be
\begin{equation}
\small
\delta_{S,x}(y) = \phi_{S\cup \{x\}}(\{y\}) - \phi_S(\{y\})
\end{equation}

The total effect on the potentials, caused by the selection of $x$, on the remaining vertices of $N_{\gamma}(S)$ is

\begin{equation}
\small
\Delta_{S,x} = \sum_{y\in N_{\gamma}\setminus\{x\}}
\delta_{S,x}(y)
= |N_{\gamma}(\{x\})| + |N_{\gamma}(S)|(deg(x)|_S - \gamma(|S| + 1))
\end{equation}

We call this metric the potential of $x$ with respect to $S$. The vertex $x$ that maximizes this metric is the one with a high number of $\gamma$-neighbors and with high degree with respect to $S$. A greedy algorithm that recursively selects such a vertex will eventually terminate with a maximum dense subgraph.

\subsection{Detecting Diversified Maximal Dense Subgraphs}
Once we compute the potential for each candidate of current subgraph, we can expand the candidates by the potential order. We will show this order can facilitate the pruning by both density and diversity. 

As we will outline in the algorithm, the process of pruning and diversifying are embedded in the backtrack-based algorithm. Specially, pruning by both density and diversity is performed when the process is starting a new search subtree, while diversifying is executed when a maximal dense subgraph is found (on a non-pruned branch of the search tree).

\subsubsection{Pruning by Density}
\begin{theorem}
Let $S$ denote a dense subgraph. $k$, $i$ are in $S$'s {\small $N_{\gamma}(S)$}.  
{\small $\Delta_{S,k} \geq \Delta_{S,i}$}. 
Then, {\small $den(S \cup {k}\cup {i})\leq den(S \cup {k})$}.
\end{theorem}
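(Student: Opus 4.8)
The plan is to turn the claimed density inequality into a single elementary inequality about how strongly $i$ attaches to $S\cup\{k\}$, and then to extract the degree comparison this requires from the hypothesis $\Delta_{S,k}\ge\Delta_{S,i}$ together with the fact that $k$ is a $\gamma$-vertex of $S$. First I would fix notation: put $n=|S|$, $m=|E(S)|$, and for a vertex $v\notin S$ write $d_v=deg(v)|_S=\sum_{u\in S}w(u,v)$, so $|E(S\cup\{v\})|=m+d_v$; likewise $i$ has weighted degree $d_i+w(k,i)$ into $S\cup\{k\}$, so $|E(S\cup\{k\}\cup\{i\})|=m+d_k+d_i+w(k,i)$. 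Since ${n+2 \choose 2}={n+1 \choose 2}+(n+1)$, the fraction $den(S\cup\{k\}\cup\{i\})$ is the \emph{mediant} of $den(S\cup\{k\})=(m+d_k)/{n+1 \choose 2}$ and $(d_i+w(k,i))/(n+1)$, so it lies between them; hence $den(S\cup\{k\}\cup\{i\})\le den(S\cup\{k\})$ is equivalent to
\begin{equation}
\small
\frac{d_i+w(k,i)}{n+1}\ \le\ den(S\cup\{k\}),\qquad\mbox{equivalently}\qquad n\bigl(d_i+w(k,i)\bigr)\ \le\ 2(m+d_k).
\end{equation}
In words, the attachment ratio of the lower-potential vertex $i$ to the enlarged set must not exceed the density already attained by $S\cup\{k\}$, and it suffices to prove this.

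Next I would read the required degree comparison out of the potential hypothesis. Expanding $\phi$ in $\delta_{S,x}(y)=\phi_{S\cup\{x\}}(\{y\})-\phi_S(\{y\})$ gives $\delta_{S,x}(y)=d_x+w(x,y)-\gamma(n+1)$; summing over $y\in N_\gamma(S)\setminus\{x\}$ for $x=k$ and for $x=i$ and subtracting, the $w(k,i)$ and $\gamma(n+1)$ terms cancel and one is left with
\begin{equation}
\small
\Delta_{S,k}-\Delta_{S,i}=\bigl(|N_\gamma(S)|-1\bigr)(d_k-d_i)+\!\!\sum_{y\in N_\gamma(S)\setminus\{k,i\}}\!\!\bigl(w(k,y)-w(i,y)\bigr),
\end{equation}
where the correction sum has absolute value at most $|N_\gamma(S)|-2$ because every weight lies in $[0,1]$. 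Thus $\Delta_{S,k}\ge\Delta_{S,i}$ forces $d_k-d_i\ge-\frac{|N_\gamma(S)|-2}{|N_\gamma(S)|-1}>-1$, which gives $d_k\ge d_i$ in the unweighted case (and $d_k\ge d_i$ outright whenever $N_\gamma(S)=\{k,i\}$), and in general bounds $d_i$ above by $d_k$ up to a quantity strictly less than one; informally, a larger potential for $k$ means $k$ is at least as well attached to $S$ as $i$ is.

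Finally I would combine $d_i\le d_k$ with the density information on hand: $k\in N_\gamma(S)$ means $den(S\cup\{k\})\ge\gamma$, i.e.\ $2(m+d_k)\ge\gamma\,n(n+1)$; $i\in N_\gamma(S)$ means $m+d_i\ge\gamma{n+1 \choose 2}$; and $S$, being a dense subgraph, is connected, so $m\ge n-1$. Feeding these into the reduced inequality $n(d_i+w(k,i))\le 2(m+d_k)$ is meant to close the argument, and this is the step I expect to be the main obstacle: the crude bounds $d_i+w(k,i)\le d_k+1$ and $m\ge n-1$ are not enough by themselves when $S$ is sparse while $k$ happens to be adjacent to many of its vertices, so the proof must use the density threshold together with the fact that $i$ is itself a $\gamma$-vertex to rule out exactly those configurations in which $i$'s attachment ratio to $S\cup\{k\}$ could overshoot $den(S\cup\{k\})$ --- equivalently, to show that the accumulated density along the potential-ordered growth never falls below the attachment ratio of the vertex added next. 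The first two reductions are routine bookkeeping; this last monotonicity-of-accumulated-density step is where the real work lies.
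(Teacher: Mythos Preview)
Your reduction is exactly the paper's: both rewrite $den(S\cup\{k\}\cup\{i\})\le den(S\cup\{k\})$ as the single inequality $n\bigl(d_i+w(k,i)\bigr)\le 2(m+d_k)$ (the paper phrases it as ``$deg(i)|_{S\cup\{k\}}$ does not exceed the average degree of $S\cup\{k\}\cup\{i\}$'', which unwinds to the same thing), and both then try to feed in degree information extracted from $\Delta_{S,k}\ge\Delta_{S,i}$ together with the $\gamma$-vertex hypotheses. Your treatment of the second step---computing $\Delta_{S,k}-\Delta_{S,i}$ explicitly and bounding the correction term---is in fact more careful than the paper's, which simply invokes the closed form~(6).

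The gap you flag in your last paragraph is real, and it cannot be filled: the statement as written is false. Take $\gamma=\theta=0.5$, let $S$ be the path $a\!-\!b\!-\!c\!-\!d$ (so $n=4$, $m=3$, $den(S)=0.5$), and let $k,i$ each be adjacent to all four vertices of $S$ and to each other. Then $d_k=d_i=4$, both are $\gamma$-vertices since $den(S\cup\{k\})=7/10\ge0.5$, and by symmetry $\Delta_{S,k}=\Delta_{S,i}$, so the hypothesis holds; yet $den(S\cup\{k\}\cup\{i\})=12/15=0.8>0.7=den(S\cup\{k\})$. In your reduced inequality this reads $n(d_i+1)=20>14=2(m+d_k)$.

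The paper's own proof does not close this gap either. After the same reduction it posits unjustified auxiliary premises (e.g.\ ``$deg(i)|_S\le \frac{2(|E(S)|+deg(i)|_S)}{|S|+1}$'', which fails in the example above since $(n-1)d_i=12>6=2m$), derives sufficient side conditions such as ``$deg(i)|_S+|S|\le 2\,deg(k)|_S$'', and ends with ``In most cases, the conditions are satisfied, thus the theorem is proved.'' So your approach matches the paper's, your hesitation at the final step is well placed, and what the paper actually supports is a heuristic monotonicity holding under extra side conditions rather than the unconditional claim of the theorem.
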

\begin{proof}
In Appendix.
\end{proof}

This theorem is to say, after we sort the candidates by the order of potential, a vertex can only appear in the subtree of another vertex with higher ordering, not vice versa. If the enumeration tree grows in this way, the density will decrease along the path in depth-first search, we can safely prune an expansion if the density is less than the density threshold. 

\subsubsection{Pruning by Diversity}
Next, we discuss how to employ diversity to speedup the search and how to diversify.

\paragraph{Diversifying.} 
If having an estimate value {\small$v$} for {\small $OPT$}, as according to submodularity, we can get a crude upper bound {\small $km$} on {\small$OPT$}, so can immediately refine it. Consider the following set,
\begin{equation}
\small
O = \{ (1+\epsilon)^i \mid i\in Z, m \leq (1+\epsilon)^i \leq km \}
\end{equation}

At least one of the thresholds {\small$v\in O$} should be a pretty good estimate of {\small $OPT$}, i.e., there should exist at least some {\small$v\in O$} such that {\small$(1 - \epsilon)OPT\leq v \leq OPT$}. That means, we could run the algorithm once for each value {\small $v\in O$}, requiring multiple passes over the set of dense subgraphs. In fact, instead of using multiple passes, a single pass is enough: We simply run several copies of the algorithm in parallel, producing one candidate solution for each threshold {\small$v\in O$}. As final output, we return the best solution obtained. This procedure is called SIEVE. For space limit, we do not illustrate SIEVE\cite{badanidiyuru2014streaming}.

This algorithm assumes that we get the maximum dense subgraph at the beginning. However, even if we cannot get the maximum one at start, we can maintain $m$ that holds the length of current maximum dense subgraph. Whenever $m$ gets updated, the algorithm lazily instantiate the set $O_i$ and delete all thresholds outside $O_i$. \\

\paragraph{Pruning.} 
Different from diversifying at leaf nodes, for pruning, we have to decide immediately whether to prune current expansion or not, based on its marginal value. On one hand, we will get its upper bound. On the other hand, we want to have the threshold lower-bounded.

Assume we have found a dense subgraph $S$, and want to decide whether we should expand it further. Let $d$ be the depth of its subtree. Remember that we have got the maximum dense subgraph at first, so we can get the upper bound of $d$, $\bar{d}$, which is the size of the maximum dense subgraph. Then, the size of each dense subgraph to be generated from $S$ is at most  $\bar{d}$. Assume $S$ has already got $l$ diverse vertices. As $S$ has at most $(\bar{d}-|S|)$ nodes to expand, its marginal gain is at most $l + (\bar{d} -|S|)$.

On the other hand, we want to have its threshold lower-bounded. As analyzed at the beginning of this section, the marginal gain of each result should be at least $(v/2-cov(D))/(k-|D|)$, where $D$ is the set of maximal dense subgraphs we have got, $v$ is an estimate for $OPT$ which is the coverage for the optimal solution, $k$ is the cardinality constraint of the problem. The point is, as we have to decide immediately whether to prune current expansion or not, cannot try different or multiple values of $v$, we give our suggestion on selection of $v$ based on extensive experiments. Our suggestion is to select between 10\% and 30\% of the last $v$, which is $km$, to be $v$. With this value, we can determine the threshold for marginal gain. We also find, the last $v$ leads to the highest threshold thus the strictest pruning, which is equal to retain only the leftmost branch while pruning all the other branches in each search tree.  For higher efficiency, we use the last $v$ in most evaluations.

In summary, pruning is for the internal nodes, diversifying is for the leaves. In an internal vertex, we will prune a subtree if its predicted marginal gain is lower than the threshold based on a fixed $v$. If a branch arrives the leaf, we will put it into every $S_v$ for $v\in O_i$ if $S$ has the marginal value no lower than the threshold of $S_v$. Finally, it outputs the best solution among $S_v$.

\subsection{Outline of the Whole Algorithm}
Our algorithm grows dense subgraphs from seeds. Initially, it sorts the nodes by the global degrees, and selects the node with the highest degree, or the largest sum of weighted connections in weighted graph, as the first seed. The algorithm grows the dense subgraphs from it using a backtracking-based growth procedure. Whenever the growth process finishes, the algorithm selects the next seed by considering all the nodes that have not been covered in any of the dense subgraphs found so far and taking the one with the highest order among the remaining seeds again. The entire procedure terminates when there are no nodes remaining to consider.
\begin{figure}
\centering
\subfloat[]{\includegraphics[width=0.26\textwidth]{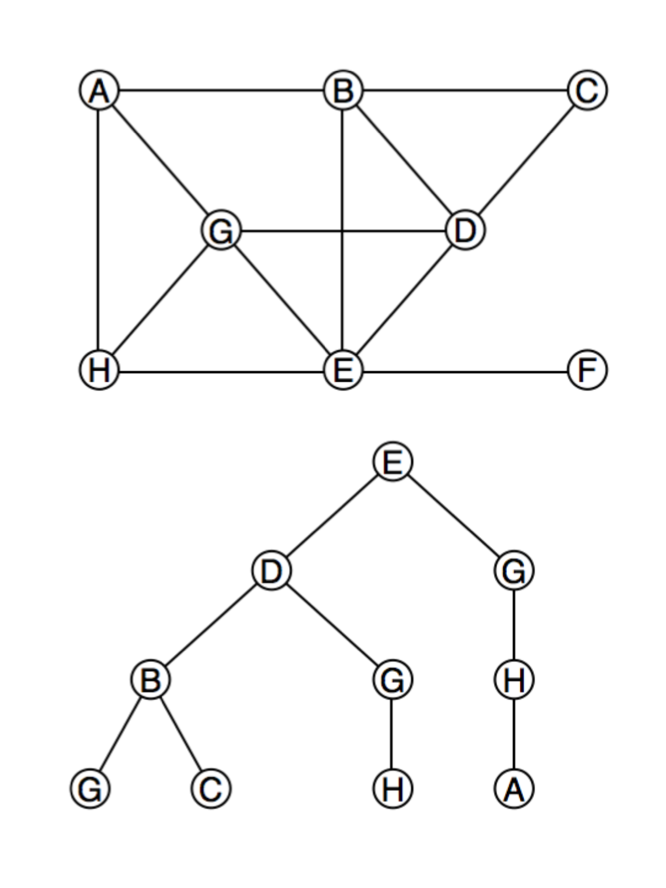}\label{fig:enum1}}
\hspace{2cm}
\subfloat[]{\includegraphics[width=0.5\textwidth]{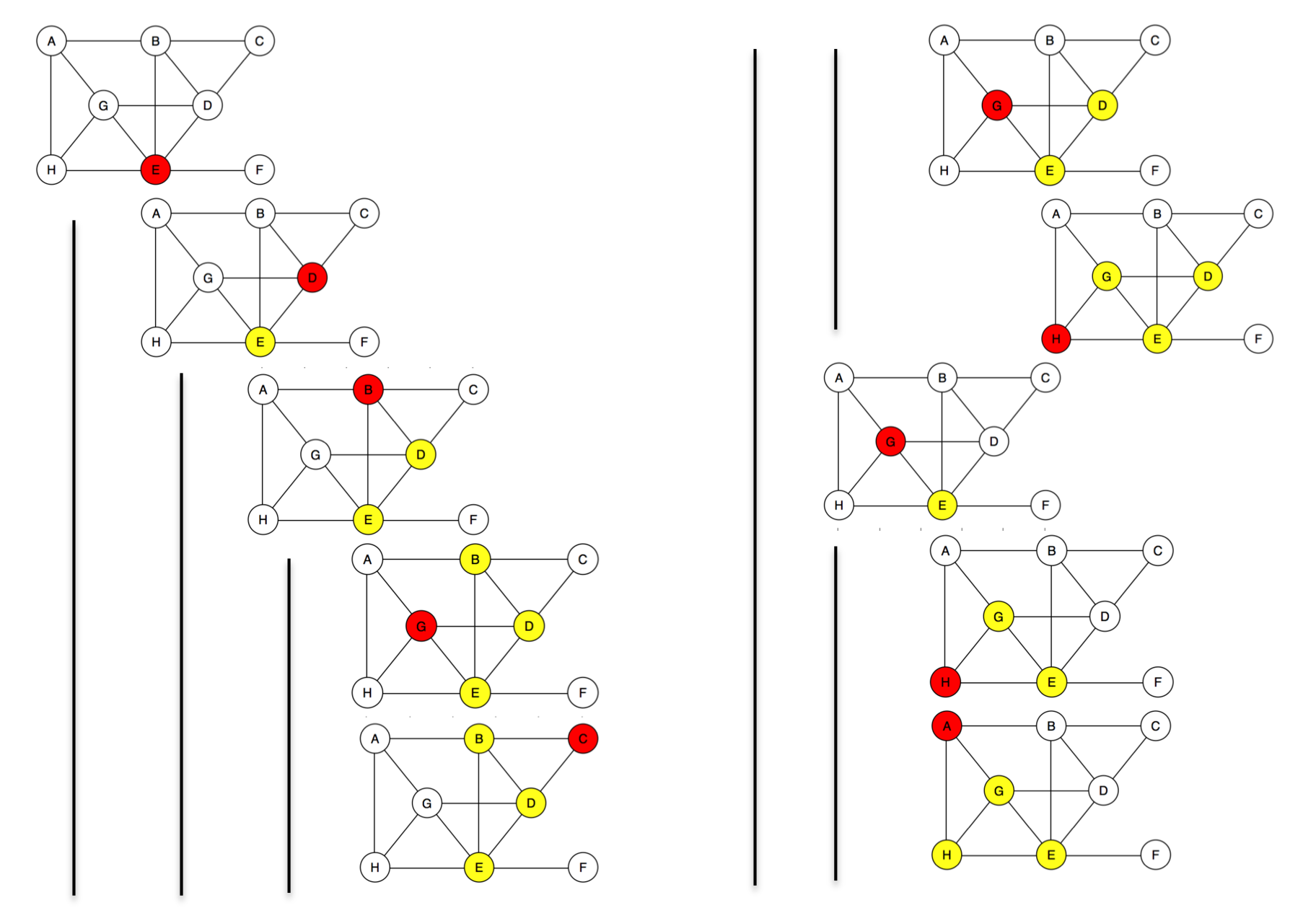}\label{fig:enum23}}
\caption{Illustration the maximal dense subgraph detection process.}
\label{fig:enum}
\end{figure}

A step-by-step description of the growth process starting from $Q$ is follows. Step1: Find all the candidates. Step 2: Order all the candidates according to potential. Step 3: For each candidate $r$ (in the decreasing order), $Q'=Q \cup \{r\}$. If $den(Q')\geq \theta$, and predicted marginal gain is no less than the threshold of diversity, let $Q_{t+1} = Q'$, and iteratively grow from $Q_{t+1}$. Otherwise, declare $Q'$ a maximal dense subgraph. Then it includes $Q'$ into every $S_v$ for $v\in O_i$ if $Q'$ has the marginal value $(v/2-|cov(S_v)|)/(k-|S_v|)$ to $S_v$. Finally, it outputs the best solution among $S_v$. The pseudo code of the algorithm is as follows.

Note that for each result to be generated uniquely, a vertex can appear in the subtree of another vertex having higher ordering than it, but not vice versa. So, for a subgraph $Q$, by $Neighbor(Q)$, we only consider those candidates that are neighbors of $Q$ excluding those having a higher ordering than its nodes. 

The main procedure MDS is as follows.\\
\IncMargin{1em}
\begin{algorithm}
\SetKwData{Left}{left}\SetKwData{This}{this}\SetKwData{Up}{up}
\SetKwFunction{Union}{Union}\SetKwFunction{SortByDeg}{SortByDeg}

\SetKwInOut{Input}{input}\SetKwInOut{Output}{output}
\Input{$G(V,E), \theta$ (density threshold), $m, k$}
\Output{List of maximal dense subgraphs}
\BlankLine
\SortByDeg{v} \\
\ForEach{$v_i \in V$}{
  $S = enum(\{v_i\}, Neighbor(\{v_i\}), G(V,E), \theta, m, k$)

}
return $argmax_{s_i \in S} cov(s_i) $
\caption{{\sc Diversified MDS}}
\end{algorithm}\DecMargin{1em}

The sub-procedure Enumeration is as follows.\\
\IncMargin{1em}
\begin{algorithm}[H]
\SetKwData{Left}{left}\SetKwData{This}{this}\SetKwData{Up}{up}
\SetKwFunction{Union}{Union}\SetKwFunction{graspSort}{graspSort}

\SetKwInOut{Input}{input}\SetKwInOut{Output}{output}
\Input{Q (current subgraph), R (remaining candidate sets)}
\Output{S (Sieve)}
\BlankLine

$enum(Q, R, G(V,E), \theta, m, k)$ \\
$isLeaf = True$

\If{$R.size == 0$}{
  $Sieve(Q, m, k)$
}
\BlankLine

\graspSort{$Q, R$} \\
\BlankLine

\ForEach {$r \in R$}{
      $Q' = Q + \{r\}$\\
      $R' = Neighbor(Q') - Q'$\\
      \If {$density(Q') > \theta$ and not $shouldPruning(Q', m, k, v^* = 0)$}
      {
        $isLeaf = False$\\
        $enum(Q', R', G(V,E), \theta, m, k)$
      }
}
\If {isLeaf}{
      $Sieve(Q, m, k)$
}

\caption{{\sc Enumeration}}
\end{algorithm}\DecMargin{1em}

Actually, diversification can be replaced by greedy max cover algorithm thus can be separated from the MAIN algorithm. Therefore we use MDS to denote the algorithm of MDS detection without diversification.

Figure~\ref{fig:enum}  illustrates the maximal dense subgraph detection process. From the seed $E$, the algorithm greedily generate the maximum dense subgraph EDBG. After that, the maximal dense subgraphs EDBC, EDGH and EGHA are generated. All the other branches are pruned because they do not bring diverse contribution. For space limit, diversifying is not illustrated.

\subsection{Computational Complexity of the MDS Algorithm}
If the network is extremely dense (e.g., fully connected with weight 1.0) and the density threshold is extremely low (e.g. 0.0), the depth-first search will exhaust all the possible subgraphs. Therefore, the worst-case theoretical time complexity is exponential. However, real PPI networks are quite sparse. As a result, in practice, our MDS algorithm is very efficient. ‬‬‬Moreover, as follows, we also provide the MDS-Scale (i.e., partition-enabled MDS), which is based on graph partitions and thus allowing fully parallelization.

\section{Scaling Up the Algorithm Based on Partitioning}
For a large-scale graph, we will first partition it into several pieces. On each piece, the mining algorithm is executed to discover the diversified maximal dense subgraphs. After that, the results from each partition are merged into one set to be diversified.
For the partitioning, we use the measure cohesiveness \cite{nepusz2012detecting} to roughly cut the graph into several parts, with each of which being no bigger than a size limit. When merging the results from each partition, we first extend each maximal dense subgraph based on density, to avoid the effect of partitioning. If one subgraph has been covered by $80\%$ of another extended subgraph, we will not retain it in the result set. After that, we merge the extended results from each partition into one set and diversify it by $cov/k$. The algorithm is as follows.

The main procedure MDS-Scale is in appendix.\\

\section{Result}
In this section, we evaluate the effectiveness and efficiency of our method, compare with the state-of-art methods on a range of biological networks from different species. We use several real world networks from two different organisms, one category is five PPI datasets for yeast, and the other is the genetic network for human. The statistics of these datasets is in Table \ref{tab:stats} of Appendix. 

\subsection{Experiment setup}

\subsubsection{Gold standard}
For PPI networks, we compare the detected complexes to four reference complex sets: the first derived from the MIPS catalog of protein complexes and the second from Gene Ontology-based complex annotations in the SGD. 
The Compleat gold standard is from http://www.flyrnai.org/compleat/. 
The Human-String ground truth (String-GT) is gained from: \\
http://mips.helmholtz-muenchen.de/genre/proj/corum/. Table~\ref{tb:gold} illustrates some properties of the gold standards.


\begin{table}
\centering
\caption{Statistics of the gold standards}
\label{tb:gold}
\begin{tabular}{|c|c|c|c|c|}
\hline 
 & MIPS & SGD & COMPLEAT & String-GT\tabularnewline
\hline 
\hline 
Number of nodes  & 1189 & 1279 & 2877 & 1700 \tabularnewline
\hline 
Number of complexes  & 203 & 323 & 1317 & 840 \tabularnewline
\hline 
Overlapping complex pairs  & 401(2\%) & 296(0.6\%) &9845 (1.1\%) & 7443(2\%) \tabularnewline
\hline 
\end{tabular}

\end{table}

\subsubsection{Evaluation metrics} 
We assess the quality of the predicted complexes by three scores defined in [16]: fraction (frac), accuracy (acc), and the maximum matching ratio (mmr). frac is defined to be the fraction of pairs between predicted and reference complexes with an overlap score no less than 0.25; acc is the geometric mean of two other measures, namely the clustering-wise sensitivity (Sn) and the clustering-wise positive predictive value (PPV). See \cite{nepusz2012detecting} for a precise description of them. mmr is based on a maximal one-to-one mapping between predicted and reference complexes. With a bipartite graph, in which the two sets of nodes represent the reference and predicted complexes respectively, and an edge is weighted by the overlap score between two complexes, mmr is the total weight of the selected edges which represent an optimal one-to-one matching between the two sets, divided by the number of reference complexes. 

For all these metrics, we compare under different density thresholds with COACH, MCL and ClusterONE, as they can use density as a parameter. Besides these metrics, we also do comparisons on coverage. 

\subsubsection{Baseline for comparison} We choose the most prominent methods for detecting protein complexes, MCODE\cite{bader2003automated}, RNSC\cite{king2004protein}, MCL\cite{van2001graph}\cite{van2008graph}, AP\cite{frey2007clustering}, CMC\cite{liu2009complex}, COACH\cite{wu2009core} and ClusterONE\cite{nepusz2012detecting} as the baseline. For easy to understand, we first briefly review ClusterONE. For the other baselines, please refer to ``Other baseline methods" section in appendix for detailed introduction. 

ClusterONE consists of three major steps. First, starting from a single seed vertex, a greedy procedure adds or removes vertices to find groups with high cohesiveness. The growth process is repeated from different seeds to form multiple, possibly overlapping groups. Second, merge those pairs of groups for which the overlap score is above a specified threshold. Third, discard complex candidates that contain less than three proteins or whose density is below a given threshold. We can see, the density metric is used as a post-processing filter of the method, which is not as direct as our density metric.

\subsection{Evaluations}
\subsubsection{Effectiveness}

\begin{figure}
  \centering
  \subfloat[]{\includegraphics[width=0.46\textwidth]{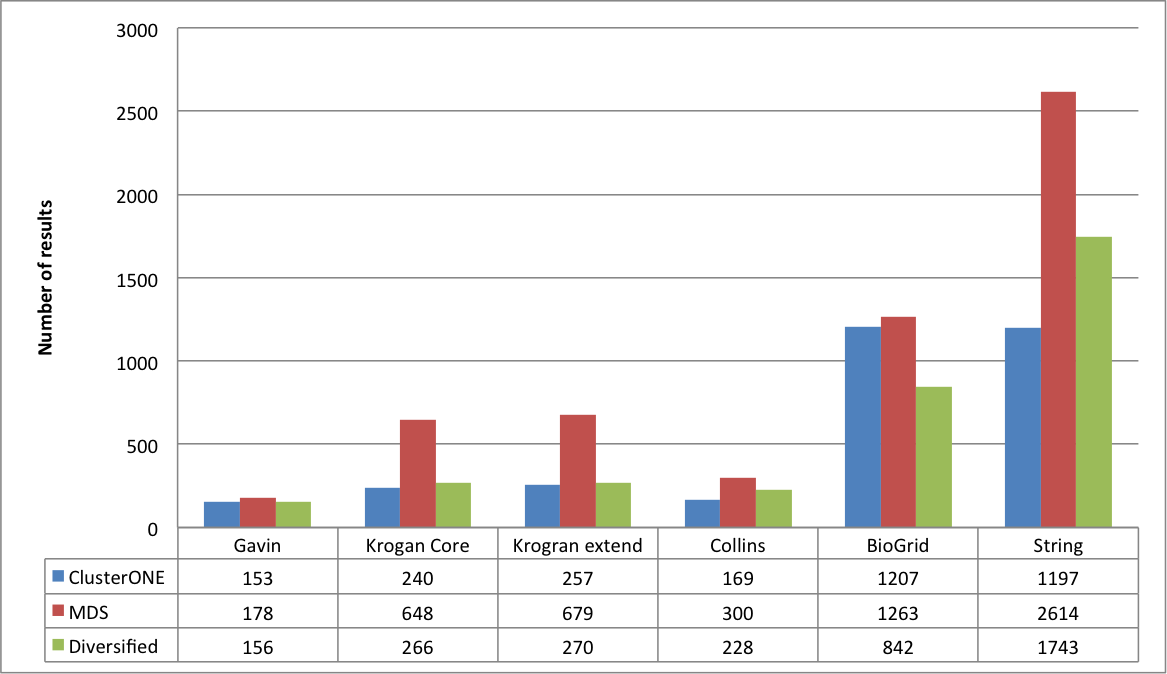}\label{fig:number}}
  \hspace{0.5cm}
  \subfloat[]{\includegraphics[width=0.5\textwidth]{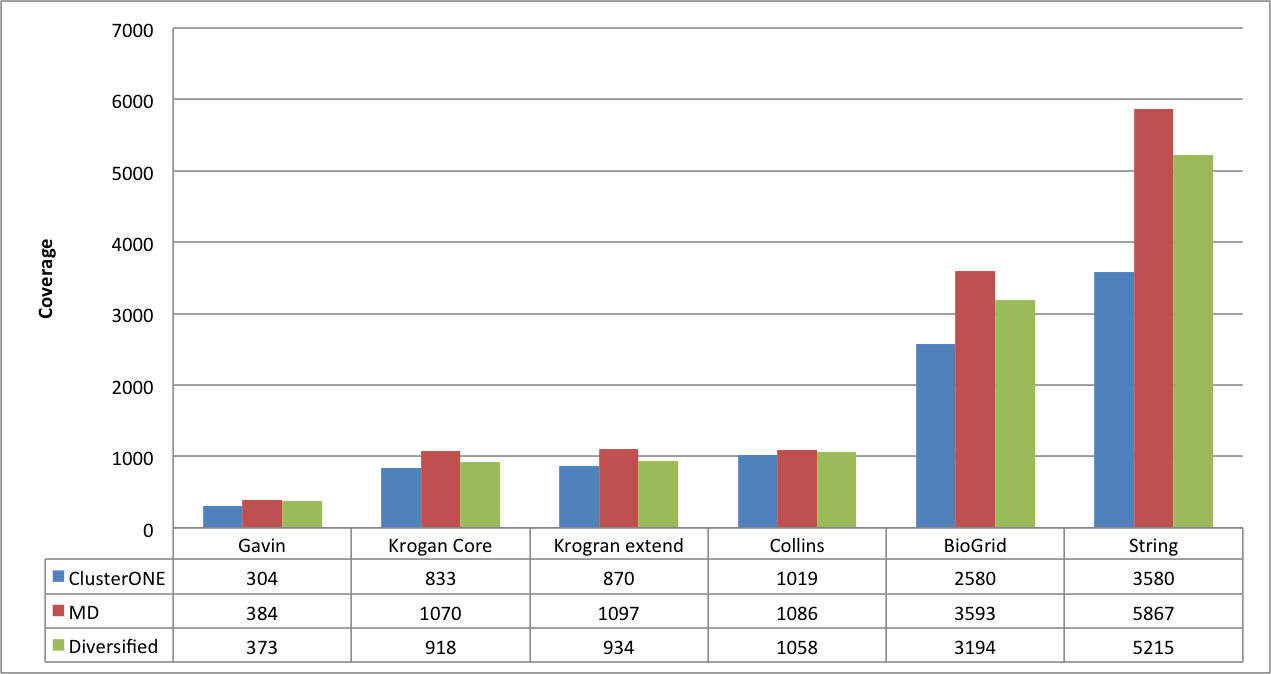}\label{fig:coverage}}
  
  \caption{Number of results and coverage for different methods on all the datasets under different density thresholds}\label{fig:all_coverage}
 \end{figure}

In this section, we present the number of results for different methods on different datasets under different density thresholds. The evaluation is under density thresholds between 0.5 and 0.9. We visualize density 0.6 for number of results in Fig. \ref{fig:number} and density 0.5 for coverage in Fig. \ref{fig:coverage}. The detailed results for other thresholds are quite similar and shown in Table \ref{tab:numresult} and Table \ref{tab:covaragenew} in Appendix. As shown, our method finds more complexes than all the other methods on all the datasets and density threshold. This is because that, comparing with the local search heuristics in ClusterONE and other clustering-based methods, our method is based on enumeration with smart ordering based on candidates, which can discover more results. However, pruning can still guarantee the efficiency, and diversifying the results can remove the redundancy as well. For all the datasets except Collins, `MDS' represents using our method `Maximal Dense Subgraphs Mining' with leftmost branch only, and `Diversified' represents `$cov/k>=4$'. For Collins, `Diversified' represents `$cov/k>=5$'.

\subsubsection{Performance}
We present the three quality scores obtained using both the MIPS, SGD and Compleat reference set in Figure~\ref{fig:mips_sgd} and \ref{fig:merge}. Shades of the same color denote individual quality scores; the total height of each bar is the value of the composite score. Larger scores are better, and the sum of the three scores is a composite score. 

We  compare the scores on unweighted network, which is the binary version of the weighted datasets, illustrated in \ref{fig:f9} and \ref{fig:f10}. We are capable of matching more complexes with a higher accuracy and providing a better one-to-one mapping with reference complexes in almost all the 4 PPI datasets (except collins) comparing to other baseline methods. We don't show the result of Biogrid since many baselines can not finish it efficiently. 
For weighted PPI networks, please refer to Figure \ref{fig:mips_sgd} in appendix.



\begin{figure}[!tbp]
  \centering
  \subfloat[]{\includegraphics[width=0.45\textwidth]{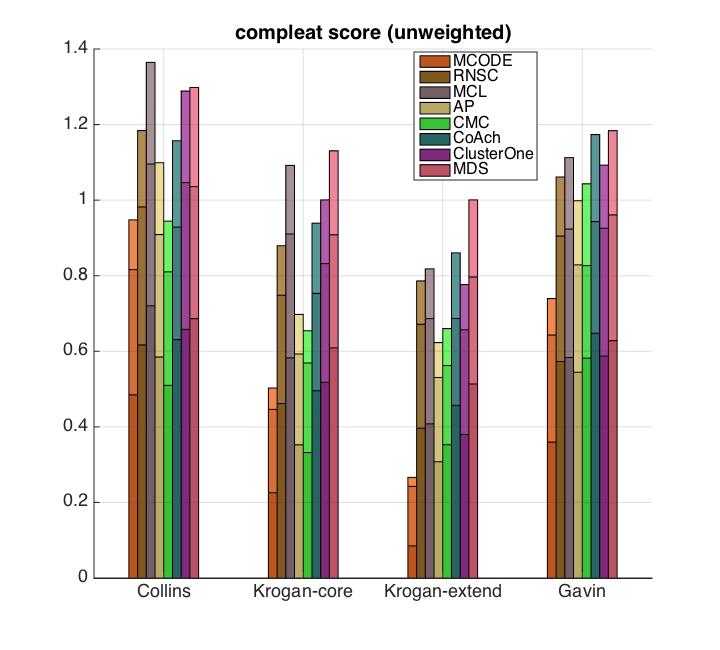}\label{fig:f9}}
    \hspace{1cm}
  \subfloat[]{\includegraphics[width=0.45\textwidth]{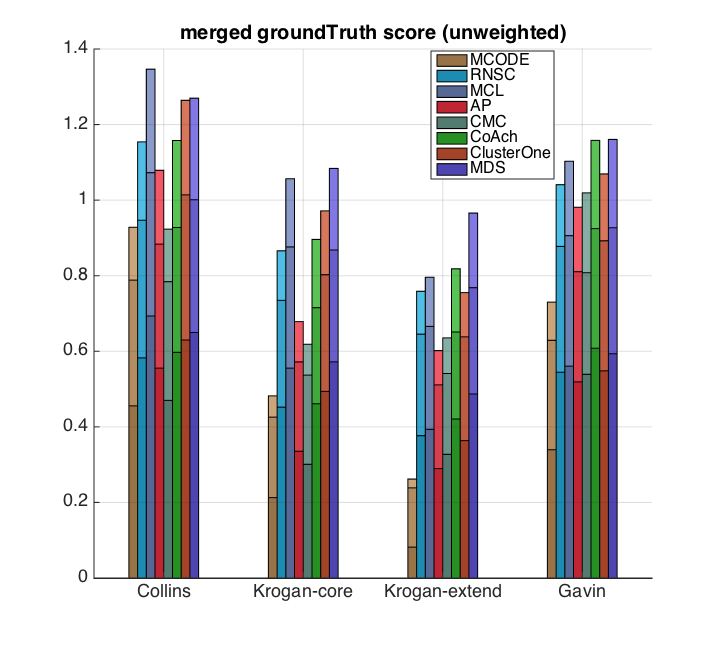}\label{fig:f10}}
  \caption{Results (bottom-up: frac, acc, mmr) of various methods on 4 PPI unweighted datasets using compleat and a mixture of 3 gold standards.}\label{fig:merge}
\end{figure}


For the PPV score, as mentioned in supplemental material of \cite{nepusz2012detecting}, a perfect clustering algorithm that always returns the reference complexes from the data may have a lower positive predictive value than a dummy algorithm which places every protein in a separate cluster. 

The accuracy measure explicitly penalizes predicted complexes that do not match any of the reference complexes. However, gold standard sets of protein complexes are often incomplete \cite{nepusz2012detecting}. As a consequence, predicted complex not matching any known reference complexes may still exhibit high functional similarity or be highly co-localized, and therefore they could still be prospective candidates for further in-depth analysis. In other words, a predicted complex that does not match a reference complex is not necessarily an undesired result.We evaluate the functional homogeneity of some detected complexes by literature search as follows.

We find the following complex in Gavin under density threshold 0.3:
(YBR265W YPR035W YDR028C YBR126C YNL076W YMR261C YBL023C YML100W YDR074W) which cannot be found in any gold standard.

We search the literature and find this complex included in the excel file ``1752-0509-3-74-S3.XLS'' from this \href{https://www.repository.cam.ac.uk/}{link}:
``Set of 491 complexes originating from Gavin et al., Nature 2006''.
It is included in Trehalose-6-phosphate synthase/ phosphatase complex:
YBL023C YBR265W YNL076W YPR035W YBR126C YDR028C YDR074W YDR171W YJL138C YJR138W YKR059W YLR075W YML100W YMR261Cs

It also appears in the ``Process GO Term physiological processes and related genes'' from this \href{http://gonet.genomics.ics.uci.edu/pgo/p1_5_115.html}{link}. It gives out the set of interacting Genes of YBR126C, which includes all the proteins in the above complex we find.

\subsubsection{Performance on Human genetic network}
We present the results on String dataset. By partitioning algorithm, we first partition the dataset, and discover the maximal dense subgraphs in each partition, and then merge the results into one whole, and then do diversifying (cov/k). We present our performance score of String dataset with different baseline methods. 

$I$ is the parameter inflation for MCL algorithm, which tunes the granularity of the clustering. Larger inflation values result in smaller clusters, while smaller inflation values generate only a few large clusters.


\begin{table}
\caption{Comparison of other methods and our methods on String datasets.}
\centering
\begin{tabular}{|c|c|c|c|c|}
\hline 
 & MCL: $I = 3.5$  & ClusterOne: $\theta = 0.6$  & MDS $\theta = 0.8$  & Diverse $cov/k >= 4, \theta = 0.8$\tabularnewline
\hline 
\hline 
frac & 0.0216 & 0.2938 & 0.5336 & 0.476\tabularnewline
\hline 
acc & 0.2311 & 0.3704 & 0.3341 & 0.3233\tabularnewline
\hline 
mmr & 0.0086 & 0.0884 & 0.1693 & 0.1332\tabularnewline
\hline 
sum & 0.2613 & 0.7526 & 1.037 & 0.9325\tabularnewline
\hline 
\end{tabular}
\end{table}

\subsubsection {Scalability} We evaluate the efficiency and scalability of our algorithm on various datasets. Fig. \ref{fig:t1} illustrates the linearity of our MDS-Scale method v.s. MDS along with  PPI networks of different sizes. Fig. \ref{fig:t2} shows that the running time of MDS-Scale method is more stable than MDS on different density threshold.

\begin{figure}
  \centering
  \subfloat[]{\includegraphics[width=0.35\textwidth]{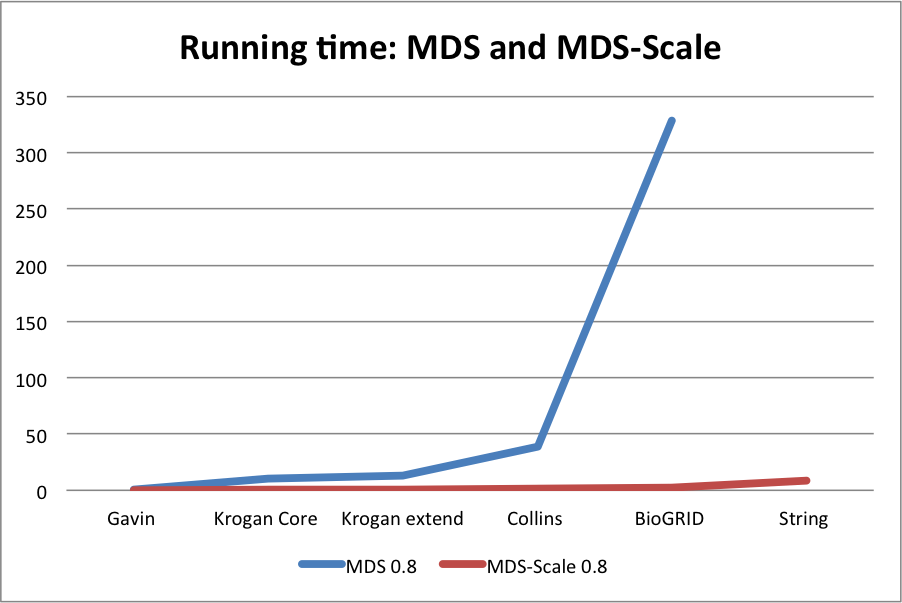}\label{fig:t1}}
    \hspace{1cm}
  \subfloat[]{\includegraphics[width=0.35\textwidth]{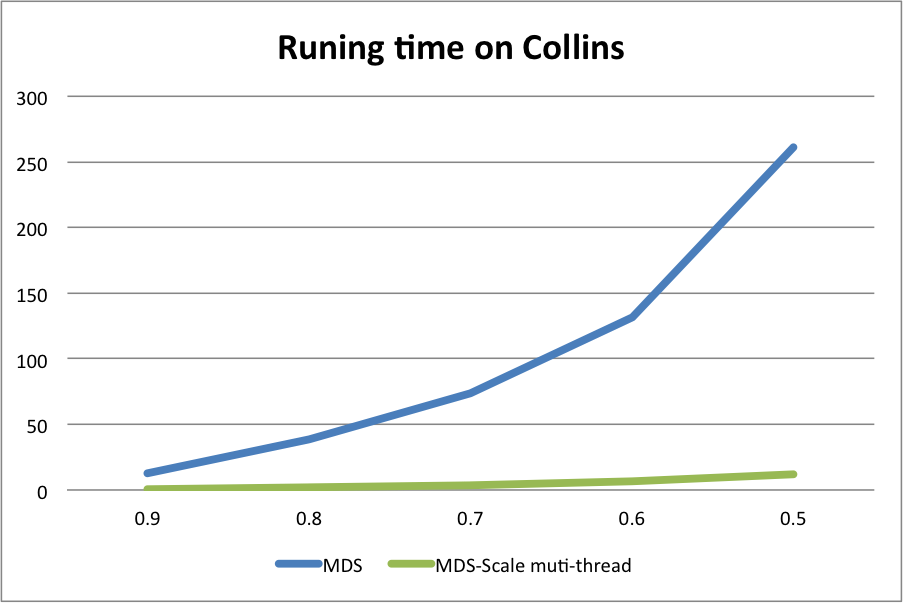}\label{fig:t2}}
  \caption{Comparison of the runtime (seconds) of our methods on various biological networks (a) and under different density threshold (b).}

\end{figure}
The partitions numbers in our evaluation are illustrated as follows in Table. \ref{tab:partnum}. The size and number of partitions can be easily adjusted for different computing systems. 

\begin{table}
\caption{Partitions numbers for different datasets.}
\label{tab:partnum}
\centering
\begin{tabular}{ | l | l | l | l | l | l | l | }
\hline
& Gavin & Krogan Core & Krogan Extend & Collins & BioGRID & String\\
\hline
No. Partitions & 220 & 639 & 1241 & 254 & 1483 & 4725 \\
\hline
  
\end{tabular}

\end{table}

\section{Discussion}
We have modeled the problem of detecting complexes in biological networks as discovering the diversified maximal dense subgraphs. With the edge density measure, the dense subgraphs, that are complexes, can be defined explicitly and flexibly. Based on this, we seek all the dense subgraphs of maximal cardinality with at least the specified density. Meanwhile, the results without enough diversity are sieved along the enumeration. We also scale up the algorithm by partitioning method. We evaluate the effectiveness and efficiency of our method on a diverse set of interaction networks from different species, five PPI networks for yeast and one genetic network for human. Extensive experiments show our results have better correspondence with reference complexes in MIPS, SGD, Compleat and the ground-truth database for String than the state-of-arts.

In the future, we plan to pursue further improvements in larger-scale and denser networks. We believe the algorithm is general in various domains, such as identifying groups of densely interconnected nodes in social networks or word association networks. We also hope to explore other biological network-based applications, such as identifying homology relationships between sequences and orthology inference across multiple species.

\subsubsection*{Supplementary Data and Availability.}
The source code and supplementary data are available at
https://github.com/zgy921028/MDSMine\\

\bibliographystyle{plain}

\bibliography{ref}
\newpage
\section*{Appendix}
\vspace{0cm}

\vspace{1cm}
\subsection*{Proof of Theorem 3}
\begin{proof}

The conclusion holds if and only if \\
$deg(i)|_{S\cup\{k\}}$ is no greater than the average degree of $S\cup \{k\} \cup \{i\}$. 
The average degree of $S\cup \{k\} \cup \{i\}$ is 
$$\frac{2 (|E(S)| + deg(k)|_S + deg(i)|_S + w(i,j))}{|S| + 2}$$

So we need to prove $deg(i)|_{S\cup\{k\}} \leq \frac{2 (|E(S)| + deg(k)|_S + deg(i)|_S + w(i,j))}{|S| + 2}$\\

(1) When $i$ and $k$ have no edge.\\

As $deg(i)|_S \leq \frac{2 (|E(S)| + deg(i)|_S )}{|S| + 1}$, we have 
$2 |E(S)| \geq (|S| -1) deg(i)|_S$.\\

Similarly, we have  $2 |E(S)| \geq(|S| -1) deg(k)|_S$.\\

When $deg(i)|_S \leq 2 deg(k)|_S$ or $|S| deg(i)|_S \leq (|S|+1) deg(k)|_S$ is satisfied, the theorem holds.\\

(2) When there is an edge between i and k.
If $deg(i)|_S + |S| \leq 2 deg(k)|_S$ is satisfied, the theorem is proved. \\

Or, if $|S| deg(i)|_S + |S| \leq (|S| + 1)deg(k)|_S$ is satisfied, the theorem is also proved. \\

In most cases, the conditions are satisfied, thus the theorem is proved. 
\end{proof}

\newpage
\subsection*{Supplementary Table I}
\vspace{2cm}
\begin{table}
\centering

\begin{tabular}{|c|c|c|c|c|c|c|}
\hline 
 & Gavin & Krogan Core & Krogan extended & Collins & BioGrid & Human-String\tabularnewline
\hline 
\hline 
$|V|$ & 1855  & 2708 & 3672  & 1622  & 5640  & 16384\tabularnewline
\hline 
$|E|$ & 7669  & 7123  & 14317  & 9074 & 59748 & 419383\tabularnewline
\hline 
Weighted & Y & Y & Y & Y & N & Y\tabularnewline
\hline 
\end{tabular}
\caption{Statistics of yeast PPI datasets: the number of vertices and edges.}
\label{tab:stats}
\end{table}

\vspace{2cm}
\begin{table}
\centering

\begin{tabular}{ | l | l | l | l | l | l | l | }
\hline
  Dataset & Method & 0.9 & 0.8 & 0.7 & 0.6 & 0.5 \\ \hline \hline
  Gavin & MDS & 2 & 20 & 50 & 105 & 178 \\ \hline
   & Diverse & 1 & 5 & 28 & 94 & 156 \\ \hline
   & ClusterONE & 0 & 5 & 34 & 89 & 153 \\ \hline
  Krogan Core        & MDS & 211 & 232 & 260 & 382 & 648 \\ \hline
   & Diverse & 121 & 157 & 185 & 230 & 266 \\ \hline
   & ClusterONE & 79 & 111 & 139 & 183 & 240 \\ \hline
  Krogan extended  & MDS & 204 & 222 & 272 & 397 & 679 \\ \hline
   & Diverse & 123 & 157 & 189 & 234 & 270 \\ \hline
   & ClusterONE & 82 & 112 & 142 & 200 & 257 \\ \hline
  Collins & MDS & 189 & 225 & 235 & 240 & 300 \\ \hline
  (cov/k:5) & Diverse & 141 & 170 & 193 & 212 & 228 \\ \hline
   & ClusterONE & 100 & 117 & 128 & 136 & 169 \\ \hline
  BioGRID & MDS & 897 & 984 & 1061 & 1198 & 1263 \\ \hline
   & Diverse & 584 & 656 & 707 & 799 & 842 \\ \hline
   & ClusterONE & 178 & 241 & 263 & 473 & 1207 \\ \hline
  String & MDS & 774 & 1141 & 1508 & 1956 & 2614 \\ \hline
   & Diverse & 516 & 761 & 1006 & 1304 & 1743 \\ \hline
   & ClusterONE & 165 & 294 & 422 & 721 & 1197 \\ \hline
  
\end{tabular}
\caption{Number of results for different methods on all the datasets under various density thresholds}
\label{tab:numresult}

\end{table}

\newpage
\subsection*{Supplementary Table II}
\vspace{5cm}
\begin{table}
\centering

\begin{tabular}{ | l | l | l | l | l | l | l | }
\hline
  Dataset & Method & 0.9 & 0.8 & 0.7 & 0.6 & 0.5 \\ \hline
  Gavin & MDS & 6 & 60 & 169 & 384 & 644 \\ \hline
   & Diverse & 3 & 19 & 111 & 373 & 622 \\ \hline
   & ClusterONE & 0 & 17 & 107 & 304 & 585 \\ \hline
  Krogan Core     & MDS & 573 & 701 & 814 & 1070 & 1443 \\ \hline
   & Diverse & 483 & 626 & 739 & 918 & 1061 \\ \hline
   & ClusterONE & 349 & 496 & 652 & 833 & 1048 \\ \hline
  Krogan extended  & MDS & 572 & 691 & 837 & 1097 & 1487 \\ \hline
   & Diverse & 491 & 626 & 754 & 934 & 1078 \\ \hline
   & ClusterONE & 354 & 495 & 658 & 870 & 1064 \\ \hline
  Collins & MDS & 750 & 901 & 1005 & 1086 & 1210 \\ \hline
  (cov/k:5) & Diverse & 702 & 846 & 963 & 1058 & 1138 \\ \hline
   & ClusterONE & 629 & 809 & 915 & 1019 & 1147 \\ \hline
  BioGRID & MDS & 2689 & 2951 & 3181 & 3593 & 3788 \\ \hline
   & Diverse & 2335 & 2623 & 2827 & 3194 & 3367 \\ \hline
   & ClusterONE & 1151 & 1580 & 1776 & 2580 & 4418 \\ \hline
  String & MDS & 2321 & 3422 & 4523 & 5867 & 7841 \\ \hline
   & Diverse & 2063 & 3042 & 4021 & 5215 & 6970 \\ \hline
   & ClusterONE & 970 & 1642 & 2381 & 3580 & 5618 \\ \hline
\end{tabular}
\caption{Coverage (Number of vertices covered) of different methods on all the datasets under various density thresholds}
\label{tab:covaragenew}

\end{table}

\newpage
\subsection*{Partition algorithm}
\vspace{5cm}
\IncMargin{1em}
\begin{algorithm}[H]
\SetKwData{Left}{left}\SetKwData{This}{this}\SetKwData{Up}{up}
\SetKwFunction{Union}{Union}\SetKwFunction{partition}{partition}
\SetKwFunction{Union}{Union}\SetKwFunction{extend}{extend}
\SetKwFunction{Union}{Union}\SetKwFunction{covK}{covK}

\SetKwInOut{Input}{input}\SetKwInOut{Output}{output}
\Input{$G(V,E), theta$ (density threshold), $m, k, \beta$ (cov/k threshold)}
\Output{List of maximal dense subgraphs}
\BlankLine

$rPlst, sPlst=$  \partition $(G(V,E), threSize, maxPartSize)$
\BlankLine

\ForEach {$P\in rPlst$}
{
  $SortByDeg{V(P)}$
  
  \ForEach {$vi \in V(P)$}
  {
    $mdsList = MDS(P(V,E), \theta, m, k)$
  }
}
\BlankLine

\ForEach {$P'\in sPlst$}
{
  \If {$density(Q') > \theta$}
  {
    $mdsList.add(P')$
  }
}

\extend(mdsList)
\BlankLine
\Return \covK(mdsList, $\beta$)

\caption{{\sc MDS-Scale Main}}
\end{algorithm}\DecMargin{1em}

\newpage
\subsection*{Other baseline methods}
\vspace{1cm}
\IncMargin{1em}
The MCODE algorithm \cite{bader2003automated} consists of three phases: vertex weighting, protein complex formation and post-processing. The vertex weighting phase assigns a score to each vertex measuring the “cliquishness” of the neighborhood of the vertex. Protein complexes are then grown from each vertex, starting from the one with the highest weight. Finally, there are two possible post-processing operations: haircut, which iteratively removes vertices that are connected by only a single edge to the rest of the complex, and fluffing, which tries to expand the complex with other vertices if they connect to many vertices of the same complex. MCODE is able to produce overlapping complexes in the fluffing phase.

The RNSC algorithm \cite{king2004protein} is Restricted Neighborhood Search Clustering algorithm, which partitions the network’s node set into clusters based on a cost function that is assigned to each partitioning. It then filtered the RNSC output so that only clusters that share characteristics of known protein complexes are considered. The RNSC searches for a low-cost clustering by first composing an initial random clustering, then iteratively moving one node from one cluster to another in a randomized fashion to improve the clustering’s cost. RNSC is essentially a cost-based local search algorithm.

MCL \cite{van2008graph} is a graph clustering algorithm, whose engine is the Markov Cluster Process or the MCL process. The MCL process takes a stochastic matrix as input, and then alternates expansion and inflation, each step defining a stochastic matrix in terms of the previous one. The process is typically applied to the matrix of random walks on a given graph G, and the connected components of the graph associated with the process limit generically allow a clustering interpretation of G.

AP \cite{frey2007clustering} simultaneously considers all data points as potential exemplars, then recursively transmits real-valued messages along edges of the network until a good set of exemplars and corresponding clusters emerges. Messages are updated on the basis of simple formulas that search for minima of an appropriately chosen energy function. At any point in time, the magnitude of each message reflects the current affinity that one data point has for choosing another data point as its exemplar. So, the method is called “affinity propagation”.

CMC \cite{liu2009complex} is a Clustering method based on Maximal Cliques (CMC) to detect protein complexes. CMC first obtains all the maximal cliques, then assigns each interaction a score based on a reliability measure. Therefore, each clique can be scored with its weighted density. Last, CMC removes or merges highly overlapping cliques to generate protein complexes. In particular, if two cliques are highly overlapping, CMC either merges these two cliques as a bigger one or simply removes the one with a lower score (weighted density) depending on their inter-connectivity.

COACH \cite{wu2009core}. To provide insights into the organization of protein complexes, Wu et al.\cite{wu2009core}  presents a COre-AttaCHment based method (COACH) which detects protein complexes in two stages. In the first stage, COACH defines core vertices from the neighborhood graphs and then detects protein-complex cores as the hearts of protein complexes. In the second stage, COACH includes attachments into these cores to form biologically meaningful structures.

\newpage
\subsection*{Supplementary Figure}
 \begin{center}
\vspace{5cm}

Our MDS and diversified method outperformed all the other density-based baselines by choosing 0.7 as our density threshold on weighted PPI networks. 
\begin{figure}
  \subfloat[]{\includegraphics[width=0.45\textwidth]{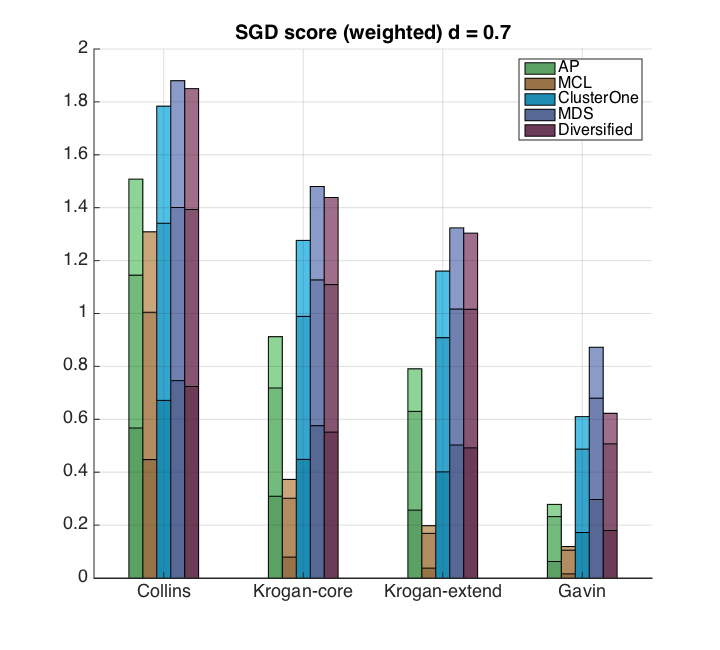}\label{fig:sgd07}}
  \hspace{1cm}
  \subfloat[]{\includegraphics[width=0.45\textwidth]{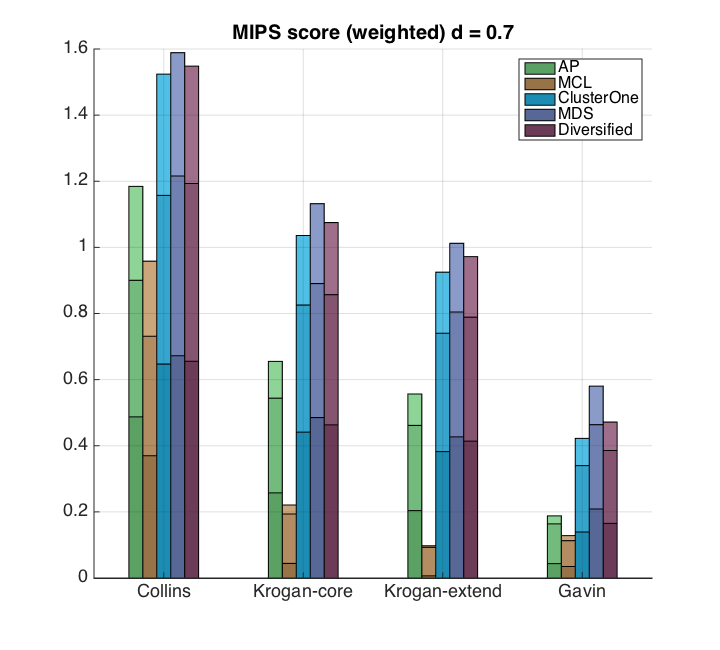}\label{fig:mips07}}
  \caption{Results (bottom-up: frac, acc, mmr) of various methods on 4 PPI weighted datasets using MIPS (a) and SGD(b) gold standard.}  \label{fig:mips_sgd}
\end{figure}
\end{center}
\end{document}